\documentclass[a4paper,12pt]{amsart}

\usepackage{a4wide,graphicx,amssymb,amscd,amsmath,latexsym,amsbsy,amsthm,color,multicol}

\usepackage{marginnote,mathtools,epsfig,enumitem,array,calc,rotating,bm,bbm,mathrsfs} 
\usepackage{subfig}
\usepackage[libertine,cmintegrals,cmbraces,vvarbb]{newtxmath}
\usepackage{booktabs}
\usepackage{multirow}
\usepackage{breqn}
\setlength{\parskip}{0.4em}

\usepackage{etoolbox}

\usepackage{tikz}

\usetikzlibrary{decorations.markings}
\usetikzlibrary{calc} 
\usetikzlibrary{arrows} 
\usetikzlibrary{patterns} 
\usetikzlibrary{decorations.pathreplacing} 

\usepackage[
   hmarginratio={1:1},     
   vmarginratio={1:1},     
   textwidth=457pt,        
   textheight=630pt,       
   heightrounded,          
]{geometry}



\makeatletter
\let\old@tocline\@tocline
\let\section@tocline\@tocline

\newcommand{\subsection@dotsep}{4.5}
\newcommand{\subsubsection@dotsep}{4.5}
\patchcmd{\@tocline}
  {\hfil}
  {\nobreak
     \leaders\hbox{$\m@th
        \mkern \subsection@dotsep mu\hbox{.}\mkern \subsection@dotsep mu$}\hfill
     \nobreak}{}{}
\let\subsection@tocline\@tocline
\let\@tocline\old@tocline

\patchcmd{\@tocline}
  {\hfil}
  {\nobreak
     \leaders\hbox{$\m@th
        \mkern \subsubsection@dotsep mu\hbox{.}\mkern \subsubsection@dotsep mu$}\hfill
     \nobreak}{}{}
\let\subsubsection@tocline\@tocline
\let\@tocline\old@tocline

\let\old@l@subsection\l@subsection
\let\old@l@subsubsection\l@subsubsection

\def\@tocwriteb#1#2#3{%
  \begingroup
    \@xp\def\csname #2@tocline\endcsname##1##2##3##4##5##6{%
      \ifnum##1>\c@tocdepth
      \else \sbox\z@{##5\let\indentlabel\@tochangmeasure##6}\fi}%
    \csname l@#2\endcsname{#1{\csname#2name\endcsname}{\@secnumber}{}}%
  \endgroup
  \addcontentsline{toc}{#2}%
    {\protect#1{\csname#2name\endcsname}{\@secnumber}{#3}}}%

\newlength{\@tocsectionindent}
\newlength{\@tocsubsectionindent}
\newlength{\@tocsubsubsectionindent}
\newlength{\@tocsectionnumwidth}
\newlength{\@tocsubsectionnumwidth}
\newlength{\@tocsubsubsectionnumwidth}
\newcommand{\settocsectionnumwidth}[1]{\setlength{\@tocsectionnumwidth}{#1}}
\newcommand{\settocsubsectionnumwidth}[1]{\setlength{\@tocsubsectionnumwidth}{#1}}
\newcommand{\settocsubsubsectionnumwidth}[1]{\setlength{\@tocsubsubsectionnumwidth}{#1}}
\newcommand{\settocsectionindent}[1]{\setlength{\@tocsectionindent}{#1}}
\newcommand{\settocsubsectionindent}[1]{\setlength{\@tocsubsectionindent}{#1}}
\newcommand{\settocsubsubsectionindent}[1]{\setlength{\@tocsubsubsectionindent}{#1}}

\renewcommand{\l@section}{\section@tocline{1}{\@tocsectionvskip}{\@tocsectionindent}{\@tocsectionnumwidth}{\@tocsectionformat}}%
\renewcommand{\l@subsection}{\subsection@tocline{1}{\@tocsubsectionvskip}{\@tocsubsectionindent}{\@tocsubsectionnumwidth}{\@tocsubsectionformat}}%
\renewcommand{\l@subsubsection}{\subsubsection@tocline{1}{\@tocsubsubsectionvskip}{\@tocsubsubsectionindent}{\@tocsubsubsectionnumwidth}{\@tocsubsubsectionformat}}%
\newcommand{\@tocsectionformat}{}
\newcommand{\@tocsubsectionformat}{}
\newcommand{\@tocsubsubsectionformat}{}
\expandafter\def\csname toc@1format\endcsname{\@tocsectionformat}
\expandafter\def\csname toc@2format\endcsname{\@tocsubsectionformat}
\expandafter\def\csname toc@3format\endcsname{\@tocsubsubsectionformat}
\newcommand{\settocsectionformat}[1]{\renewcommand{\@tocsectionformat}{#1}}
\newcommand{\settocsubsectionformat}[1]{\renewcommand{\@tocsubsectionformat}{#1}}
\newcommand{\settocsubsubsectionformat}[1]{\renewcommand{\@tocsubsubsectionformat}{#1}}
\newlength{\@tocsectionvskip}
\newcommand{\settocsectionvskip}[1]{\setlength{\@tocsectionvskip}{#1}}
\newlength{\@tocsubsectionvskip}
\newcommand{\settocsubsectionvskip}[1]{\setlength{\@tocsubsectionvskip}{#1}}
\newlength{\@tocsubsubsectionvskip}
\newcommand{\settocsubsubsectionvskip}[1]{\setlength{\@tocsubsubsectionvskip}{#1}}

\patchcmd{\tocsection}{\indentlabel}{\makebox[\@tocsectionnumwidth][l]}{}{}
\patchcmd{\tocsubsection}{\indentlabel}{\makebox[\@tocsubsectionnumwidth][l]}{}{}
\patchcmd{\tocsubsubsection}{\indentlabel}{\makebox[\@tocsubsubsectionnumwidth][l]}{}{}

\newcommand{\@sectypepnumformat}{}
\renewcommand{\contentsline}[1]{%
  \expandafter\let\expandafter\@sectypepnumformat\csname @toc#1pnumformat\endcsname%
  \csname l@#1\endcsname}
\newcommand{\@tocsectionpnumformat}{}
\newcommand{\@tocsubsectionpnumformat}{}
\newcommand{\@tocsubsubsectionpnumformat}{}
\newcommand{\setsectionpnumformat}[1]{\renewcommand{\@tocsectionpnumformat}{#1}}
\newcommand{\setsubsectionpnumformat}[1]{\renewcommand{\@tocsubsectionpnumformat}{#1}}
\newcommand{\setsubsubsectionpnumformat}[1]{\renewcommand{\@tocsubsubsectionpnumformat}{#1}}
\renewcommand{\@tocpagenum}[1]{%
  \hfill {\mdseries\@sectypepnumformat #1}}

\let\oldappendix\appendix
\renewcommand{\appendix}{%
  \leavevmode\oldappendix%
  \addtocontents{toc}{%
    \protect\settowidth{\protect\@tocsectionnumwidth}{\protect\@tocsectionformat\sectionname\space}%
    \protect\addtolength{\protect\@tocsectionnumwidth}{2em}}%
}
\makeatother



\makeatletter
\settocsectionnumwidth{2em}
\settocsubsectionnumwidth{2.5em}
\settocsubsubsectionnumwidth{3em}
\settocsectionindent{1pc}%
\settocsubsectionindent{\dimexpr\@tocsectionindent+\@tocsectionnumwidth}%
\settocsubsubsectionindent{\dimexpr\@tocsubsectionindent+\@tocsubsectionnumwidth}%
\makeatother

\settocsectionvskip{0pt}
\settocsubsectionvskip{-5pt}
\settocsubsubsectionvskip{-5pt}



\settocsectionformat{\bfseries}
\settocsubsectionformat{\mdseries}
\settocsubsubsectionformat{\mdseries}
\setsectionpnumformat{\bfseries}
\setsubsectionpnumformat{\mdseries}
\setsubsubsectionpnumformat{\mdseries}


\let\oldtableofcontents\tableofcontents
\renewcommand{\tableofcontents}{%
  \vspace*{-\linespacing}
  \oldtableofcontents}

\setcounter{tocdepth}{3}

 
\numberwithin{equation}{section}
\theoremstyle{plain}
 
\newtheorem{thm}{Theorem}[section]

\newtheorem{defi}[thm]{Definition}
\newtheorem{lem}[thm]{Lemma}

\theoremstyle{remark}
\newtheorem{rema}[thm]{Remark}


\newcommand{\Z}{\mathbb{Z}}

\newcommand{\C}{\mathbb{C}}

\newcommand{\ii}{\mathrm{i}}
\newcommand{\ket}[1]{\left|#1\right\rangle}      
\newcommand{\braket}[2]{\left\langle #1 \right. | \left.  #2 \right\rangle}


\newcommand{\hypref}[2]{\ifx\href\asklfhas #2\else\href{#1}{#2}\fi}
\newcommand{\Secref}[1]{Section~\ref{#1}}

\renewcommand{\eqref}[1]{(\ref{#1})}


\def\[{\begin{equation}}
\def\]{\end{equation}}
\def\<{\begin{eqnarray}}
\def\>{\end{eqnarray}}


\title[]{Domain-wall boundaries through non-diagonal twists \\ in the six-vertex model}
\author{W. Galleas}

\address{Institut f\"ur Theoretische Physik, Eidgen\"ossische Technische Hochschule Z\"urich, Wolfgang-Pauli-Strasse 27, 8093 Z\"urich, Switzerland}
\email{galleasw@phys.ethz.ch}

\subjclass[2010]{82B23; 39B32}
\keywords{Six-vertex model, domain-wall boundaries, non-diagonal twists}
\thanks{The work of W.G. is partially supported by the Swiss National Science Foundation through the NCCR SwissMAP}

\begin{document}

\begin{abstract}
In this work we elaborate on a previous result relating the partition function of the six-vertex model with domain-wall boundary conditions to
eigenvalues of a transfer matrix. More precisely, we express the aforementioned partition function as a determinant of a matrix with entries
being eigenvalues of the anti-periodic six-vertex model's transfer matrix. 
\end{abstract}

\maketitle

\tableofcontents

\section{Introduction} \label{sec:INTRO}

The study of lattice models of Statistical Mechanics received a large impulse with the advent of Kramers and Wannier transfer matrix technique \cite{Kramers_1941a, Kramers_1941b}. This method aims to evaluate the partition function of a given model in closed form; and it does that by recasting the model's partition function in terms of eigenvalues of an operator known as \emph{transfer matrix}. It is worth remarking that this method has been notably successful in two-dimensions and it has led, for instance, to the exact solution of the two-dimensional Ising model \cite{Onsager_1944} and the eight-vertex model \cite{Baxter_1972}.
Although the idea behind this method seems to be quite general, its practical implementation still depends on the boundary conditions assumed for the model. For instance, two-dimensional vertex models with \emph{toroidal boundary conditions} admit such reformulation comfortably as the associated transfer matrices are usually diagonalizable. On the other hand, transfer matrices associated to the case of \emph{domain-wall boundaries} are not diagonalizable in general, although they can still be made triangular; and this issue prevents the reformulation of such models as eigenvalue problems in the strict sense. 

The six-vertex model with domain-wall boundary conditions was introduced by Korepin as a tool for studying scalar products of Bethe vectors \cite{Korepin_1982}. 
However, in the course of the years this model has acquired life on its own and was ultimately responsible for clarifying fundamental concepts of Statistical Mechanics related to the role played by boundary conditions in the thermodynamical limit \cite{Justin_2000}. In addition to that, connections between the six-vertex model with domain-wall boundaries and the theory of special functions \cite{Warnaar_2008}, enumerative combinatorics \cite{Kuperberg_1996} and matrix models \cite{Justin_2000a} have also been reported in the literature to date. 

As previously remarked, vertex models with domain-wall boundaries can offer fundamental obstacles for their reformulation as strict eigenvalue problems; and this is in clear contrast to the case with toroidal boundary conditions. Nevertheless, as for the six-vertex model and elliptic generalizations one can still express such partition functions as a single determinant \cite{Izergin_1987, Galleas_2016b}; hence, in those cases the partition function can still be regarded as a product of eigenvalues.
The first determinantal representation for the partition function of the six-vertex model with domain-wall boundaries was found by Izergin
in \cite{Izergin_1987}. In particular, Izergin obtained such representation by means of an \emph{educated guess} which can be shown to satisfy a set of properties previously derived by Korepin in \cite{Korepin_1982}. Those properties, in their turn, can be shown to characterize the model's partition function uniquely; ensuring the validity of Izergin's formula.

Several alternative representations have also been found for the partition function of the six-vertex model with domain-wall boundaries. For instance, it can be expressed as multiple contour integrals \cite{deGier_Galleas_2011, Galleas_2012, Galleas_2013} or as continuous families of determinants \cite{Galleas_2016, Galleas_2016b}.
As far as the determinants obtained in \cite{Galleas_2016, Galleas_2016b} are concerned, it is important to remark they differ significantly from the one obtained by Izergin. Moreover, the determinant formulae of \cite{Galleas_2016, Galleas_2016b} are obtained from a constructive approach and 
no \emph{educated guess} is required at any point. 

Turning our attention back to the case with toroidal boundary conditions, it is important to remark that even though those cases can be tackled through 
the transfer matrix technique, there is still no guarantee that the associated eigenvalue problem can indeed be solved.
At this stage it is also important to stress that we are referring to the case with \emph{twisted boundary conditions} described in \cite{deVega_1984} also as
toroidal boundaries. Strictly speaking, the case considered in \cite{deVega_1984} describes deviations from periodic boundary conditions but, nevertheless,
the physical system under consideration is still defined on a torus \footnote{As for the boundary conditions, our use of the term \emph{twist} should not be confused with the term \emph{screw} used in \cite{Kramers_1941a}.}.
Such deviations are characterized by a certain matrix $\mathcal{G}$ which is then required to satisfy a few properties in order to preserve the integrability of the bulk vertex model. As for the six-vertex model the matrix $\mathcal{G}$ can be either purely diagonal or off-diagonal as shown in \cite{deVega_1984}. In the language of quantum hamiltonians, the off-diagonal case gives rise to \textsc{xxz} spin-chain with anti-periodic boundary conditions and we shall also use the terminology \emph{anti-periodic} when refereeing to the associated six-vertex model. 

As far as the transfer matrix's eigenvalue problem is concerned, the case where $\mathcal{G}$ is diagonal poses no obstacle for a standard Bethe ansatz analysis. The off-diagonal case, on the other hand, has been a subject of discussion for a long time \cite{Baxter_1995, Ribeiro_Galleas_2003, WengLi_2013, Galleas_Twists}. In particular, in the work \cite{Galleas_Twists} we have obtained a formula expressing the partition function of the six-vertex model with domain-wall boundaries in terms of eigenvalues of the anti-periodic six-vertex model's transfer matrix. The existence of such formula is quite unexpected since it relates systems with different boundary conditions. However, the formula obtained in \cite{Galleas_Twists} is far from simple and this issue apparently makes any further analysis out of reach. This is precisely the point where we intend to shed some light in the present paper. More precisely, here we intend to show that the relation firstly observed in \cite{Galleas_Twists} can be written as a neat determinant. 

This paper is organized as follows. In \Secref{sec:FUN} we describe the system of functional equations which will lead us to the main result of the present work. We also use \Secref{sec:FUN} to introduce conventions which will be employed throughout this paper. The analysis of the aforementioned functional equations is then presented in \Secref{sec:SOL}. In particular, we first describe our methodology for small lattice lengths for the sake of clarity before presenting the general case. Some consequences of our results are then discussed in \Secref{sec:LAM} and \Secref{sec:CONCL} is left for concluding remarks.

\section{Functional equations} \label{sec:FUN}

This is a preliminary section aimed to introduce conventions employed throughout this paper and describe the system of functional equations which will lead to the anticipated formula for the partition function of the six-vertex model with domain-wall boundaries. Since the present paper is largely based on results previously obtained in \cite{Galleas_Twists}, we shall deliberately omit details and restrict ourselves to presenting only the definitions and results which will be required in our present analysis.

\begin{defi}[Parameters]
Let $L \in \Z_{>0}$ be the lattice length and $\lambda_j \in \C$ for $1 \leq j \leq L$ be refereed to as spectral parameters. Also, write $\mu_j \in \C$ for the so called
inhomogeneity parameters on the same interval $1 \leq j \leq L$. The anisotropy parameter will be denoted by $\gamma \in \C$. The parameters $L$, $\mu_j$ and $\gamma$ are fixed.
\end{defi}

Among the main players in this work we have the eigenvalues $\Lambda \colon \C \to \C$ of the transfer matrix $T \colon \C \to \mathrm{End}\left( (\C^2)^{\otimes L} \right)$ described in \cite{Galleas_Twists}. They will share the stage with the partition function $Z \colon \C^L \to \C$ of the six-vertex model with domain-wall boundaries. Using the canonical formulation of (generalized) toroidal vertex models in terms of transfer matrices, the operator $T$ considered here builds the partition function of the six-vertex model with anti-periodic boundary conditions \cite{deVega_1984}. In particular, it enjoys Baxter's commutativity property $\left[ T(\lambda_1) , T(\lambda_2)  \right] = 0 \; \; \forall \lambda_1, \lambda_2 \in \C$ and its eigenvalue problem reads
$T(\lambda) \ket{\Psi} = \Lambda(\lambda) \ket{\Psi}$ for $\ket{\Psi} \in \mathrm{span}\left( (\C^2)^{\otimes L}  \right)$.

As for the partition function $Z$, it is a multivariate function satisfying a number of well known properties. In particular, the symmetry property 
\begin{equation*}
Z( \dots , \lambda_i , \dots, \lambda_j , \dots ) = Z( \dots , \lambda_j , \dots , \lambda_i , \dots )
\end{equation*}
plays a fundamental role in the characterization of $Z$. 
The ultimate goal of the present paper is to find a neat relation between $Z$ and $\Lambda$. This relation will be intermediated by a sequence of 
symmetric functions $\mathcal{F}_n \colon \C^n \to \C$ $(0 \leq n < L)$ satisfying a system of functional equations. In order to describe such system of equations we first need to introduce extra conventions.

\begin{defi}[Variables sets] \label{VAR}
Write $X^{p,q} \coloneqq \{ \lambda_k \in \C \mid p \leq k \leq q \}$ and additionally define $X^{p,q}_i \coloneqq X^{p,q} \backslash \{ \lambda_i \}$. We shall also use $X^{p,q}_{i,j} \coloneqq X^{p,q} \backslash \{ \lambda_i , \lambda_j \}$.
\end{defi}

\begin{rema}
The set $X^{p,q}$ and its reduced definitions will be mainly used to denote the arguments of multivariate symmetric functions. For instance, using this notation one can simply write $Z(\lambda_1, \lambda_2, \dots , \lambda_L) = Z( X^{1,L} )$.
\end{rema}

Throughout this work we will also need to consider non-symmetric multivariate functions and for that it is convenient to introduce the following vectors (ordered sets) to denote the argument of relevant functions.

\begin{defi}[Ordered sets] \label{oVAR}
Consider the vector $\mathcal{O} \left( X^{p,q} \right)  \coloneqq \left( \lambda_p, \lambda_{p+1}, \dots , \lambda_q  \right)$ with $ p \leq q$ obtained from 
$X^{p,q}$ by fixing the ordering of its elements in increasing order of variables $\lambda_i$.
We then write 
\[
\mathcal{O} \left( X^{p,q} \right)_i  \coloneqq ( \lambda_p, \lambda_{p+1}, \dots , \underbrace{\widehat{\lambda}}_{\text{i-th}}, \dots,  \lambda_q  )
\]
for the vector obtained from $\mathcal{O} \left( X^{p,q} \right)$ by removing its $i$-th component; and $\mathcal{O} \left( X^{p,q} \right)_{i,j} \coloneqq \left( \mathcal{O} \left( X^{p,q} \right)_i \right)_j$ for the one obtained by removing the $j$-th component of $\mathcal{O} \left( X^{p,q} \right)_i$.
\end{defi}

\begin{defi}[Auxiliary functions]
Assuming the same conventions employed in \cite{Galleas_Twists}, we write $a(\lambda) \coloneqq \sinh{(\lambda + \gamma)}$, $b(\lambda) \coloneqq \sinh{(\lambda)}$
and $c(\lambda) \coloneqq \sinh{(\gamma)}$ for $\lambda \in \C$. Then for $n \in \{1, 2, \dots , L+1\}$ and $1 \leq i < j \leq n$ we define the coefficient functions
$M_i^{(n)} \colon \C^{n+1} \to \C$ and $N_{j, i}^{(n)} \colon \C^{n+1} \to \C$ as 
\< \label{MN}
&& M_i^{(n)} (\lambda_0 , \lambda_1 , \dots , \lambda_n) \coloneqq \frac{c(\lambda_i - \lambda_0)}{b(\lambda_i - \lambda_0)} \prod_{\substack{k=1 \\ k \neq i}}^n \frac{a(\lambda_i - \lambda_k)}{b(\lambda_i - \lambda_k)} \frac{a(\lambda_k - \lambda_0)}{b(\lambda_k - \lambda_0)} \prod_{l=1}^L a(\lambda_0 - \mu_l) b(\lambda_i - \mu_l)
\nonumber \\
&& \qquad \qquad \quad \qquad \qquad + \frac{c(\lambda_0 - \lambda_i)}{b(\lambda_0 - \lambda_i)} \prod_{\substack{k=1 \\ k \neq i}}^n \frac{a(\lambda_0 - \lambda_k)}{b(\lambda_0 - \lambda_k)} \frac{a(\lambda_k - \lambda_i)}{b(\lambda_k - \lambda_i)} \prod_{l=1}^L a(\lambda_i - \mu_l) b(\lambda_0 - \mu_l) \nonumber \\
&& N_{j,i}^{(n)} (\lambda_0 , \lambda_1 , \dots , \lambda_n) \coloneqq \nonumber \\
&& \frac{c(\lambda_j - \lambda_0)}{a(\lambda_j - \lambda_0)} \frac{c(\lambda_0 - \lambda_i)}{a(\lambda_0 - \lambda_i)} \frac{a(\lambda_j - \lambda_i)}{b(\lambda_j - \lambda_i)} \prod_{\substack{k=0 \\ k \neq i, j}}^n \frac{a(\lambda_j - \lambda_k)}{b(\lambda_j - \lambda_k)} \frac{a(\lambda_k - \lambda_i)}{b(\lambda_k - \lambda_i)} \prod_{l=1}^L a(\lambda_i - \mu_l) b(\lambda_j - \mu_l) \nonumber \\
&& + \frac{c(\lambda_i - \lambda_0)}{a(\lambda_i - \lambda_0)} \frac{c(\lambda_0 - \lambda_j)}{a(\lambda_0 - \lambda_j)} \frac{a(\lambda_i - \lambda_j)}{b(\lambda_i - \lambda_j)} \prod_{\substack{k=0 \\ k \neq i, j}}^n \frac{a(\lambda_i - \lambda_k)}{b(\lambda_i - \lambda_k)} \frac{a(\lambda_k - \lambda_j)}{b(\lambda_k - \lambda_j)} \prod_{l=1}^L a(\lambda_j - \mu_l) b(\lambda_i - \mu_l) \; . \nonumber  \\
\>
\end{defi}

\begin{rema}
Taking into account the conventions introduced in Definition \ref{oVAR}, one can also write $M_i^{(n)} (\lambda_0 , \lambda_1 , \dots , \lambda_n) = M_i^{(n)} (\mathcal{O} \left( X^{0,n} \right))$ and $N_{j,i}^{(n)} (\lambda_0 , \lambda_1 , \dots , \lambda_n) = N_{j,i}^{(n)} (\mathcal{O} \left( X^{0,n} \right))$. 
\end{rema}

We have now gathered all the ingredients which will be required to formulate the aforementioned system of functional equations. A detailed analysis of such equations will be presented in the following sections.

\begin{lem}[Functional equations] \label{FZL}
$\exists \; \mathcal{F}_n \colon \C^n \to \C$ satisfying the system of equations
\< \label{FN}
\Lambda (\lambda_0) \mathcal{F}_n (X^{1,n}) &=& \mathcal{F}_{n+1} (X^{0,n}) + \sum_{1 \leq i \leq n} M_i^{(n)} \mathcal{F}_{n-1} (X^{1,n}_i) \nonumber\\
&& + \; \sum_{1 \leq i < j \leq n} N_{j,i}^{(n)} \mathcal{F}_{n-1} (X^{0,n}_{i,j})
\>
for  $0 \leq n \leq L-1$. In \eqref{FN} $M_i^{(n)} = M_i^{(n)} (\lambda_0 , \lambda_1 , \dots , \lambda_n)$ and $N_{j,i}^{(n)}  = N_{j,i}^{(n)} (\lambda_0 , \lambda_1 , \dots , \lambda_n)$ as defined in \eqref{MN}; and we recall $\mathcal{F}_n$ is a symmetric function. Also, we identify
$\mathcal{F}_L (X^{1,L}) = Z(X^{1,L}) \bar{\mathcal{F}}_0$ with $\bar{\mathcal{F}}_0$ a complex parameter.
\end{lem}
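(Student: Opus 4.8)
The plan is to realise each $\mathcal{F}_n$ as a matrix element of an ordered product of off-diagonal monodromy entries and to read the functional equations \eqref{FN} off the Yang--Baxter (RTT) algebra of the six-vertex model. Concretely, write the monodromy matrix as $\mathcal{T}(\lambda)=\bigl(\begin{smallmatrix} A(\lambda)&B(\lambda)\\ C(\lambda)&D(\lambda)\end{smallmatrix}\bigr)$, so that the anti-periodic transfer matrix is the $\sigma^x$-twisted trace $T(\lambda)=B(\lambda)+C(\lambda)$. Let $\ket{0}$ be the all-up reference state, for which $A(\lambda)\ket{0}=\prod_{l=1}^L a(\lambda-\mu_l)\ket{0}$, $D(\lambda)\ket{0}=\prod_{l=1}^L b(\lambda-\mu_l)\ket{0}$ and $C(\lambda)\ket{0}=0$, and let $\bra{\Psi}$ be the left eigenvector with $\bra{\Psi}T(\lambda)=\Lambda(\lambda)\bra{\Psi}$. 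I then set
\[
\mathcal{F}_n(X^{1,n}) \coloneqq \bra{\Psi}\,B(\lambda_1)\cdots B(\lambda_n)\,\ket{0}.
\]
Since $[B(\lambda),B(\mu)]=0$ this is manifestly symmetric in $\lambda_1,\dots,\lambda_n$, and $\bar{\mathcal{F}}_0$ will be fixed as the overlap of $\bra{\Psi}$ with the fully-flipped state.

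First I would evaluate $\Lambda(\lambda_0)\mathcal{F}_n(X^{1,n})=\bra{\Psi}\,T(\lambda_0)\,B(\lambda_1)\cdots B(\lambda_n)\ket{0}$ by splitting $T(\lambda_0)=B(\lambda_0)+C(\lambda_0)$. The $B(\lambda_0)$ piece merely prepends one creation operator and, by symmetry, reproduces $\mathcal{F}_{n+1}(X^{0,n})$ with coefficient one, which is the leading term on the right-hand side of \eqref{FN}. The substance of the lemma lies in the $C(\lambda_0)$ piece, which I would push to the right through $B(\lambda_1)\cdots B(\lambda_n)$ by means of the relation $C(\lambda_0)B(\lambda_i)=B(\lambda_i)C(\lambda_0)+\frac{c(\lambda_0-\lambda_i)}{b(\lambda_0-\lambda_i)}\bigl[A(\lambda_i)D(\lambda_0)-A(\lambda_0)D(\lambda_i)\bigr]$ (schematic, up to the precise conventions of \cite{Galleas_Twists}). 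The contribution in which $C(\lambda_0)$ reaches $\ket{0}$ annihilates it and drops out, so every surviving term stems from exactly one commutator event at some position $i$, which deletes $B(\lambda_i)$ and inserts the diagonal pair $A(\cdot)D(\cdot)$.

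Next I would propagate those diagonal operators back onto $\ket{0}$ using $A(\alpha)B(\mu)=\frac{a(\mu-\alpha)}{b(\mu-\alpha)}B(\mu)A(\alpha)-\frac{c(\mu-\alpha)}{b(\mu-\alpha)}B(\alpha)A(\mu)$ and the analogous $D$-rule. Because $A\ket{0}$ and $D\ket{0}$ are proportional to $\ket{0}$, each diagonal operator contributes at most one ``exchange'', and sorting by the fate of $\lambda_0$ gives the two remaining families: if both diagonal operators pass through without exchanging $\lambda_0$, then $\lambda_0$ is contracted into a scalar and one is left with $\mathcal{F}_{n-1}(X^{1,n}_i)$, whose coefficient assembles — via $A(\lambda_0)\ket{0}$, $D(\lambda_i)\ket{0}$ and their counterparts — into the two-term expression $M_i^{(n)}$ of \eqref{MN}; if instead one exchange reinstates $\lambda_0$ as a creation rapidity while a second original rapidity $\lambda_j$ is absorbed, one gets $\mathcal{F}_{n-1}(X^{0,n}_{i,j})$ with coefficient $N_{j,i}^{(n)}$. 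The two summands in each of $M_i^{(n)}$, $N_{j,i}^{(n)}$ are precisely the images of the two terms $A(\lambda_i)D(\lambda_0)$ and $A(\lambda_0)D(\lambda_i)$ in the commutator above. Finally, $\mathcal{F}_L(X^{1,L})=Z(X^{1,L})\bar{\mathcal{F}}_0$ follows because $B(\lambda_1)\cdots B(\lambda_L)\ket{0}$ lies in the one-dimensional fully-flipped sector and equals $Z(X^{1,L})$ times the flipped reference vector, while $\mathcal{F}_{L+1}\equiv 0$ closes the hierarchy and accounts for the range $0\le n\le L-1$.

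The main obstacle I anticipate is the bookkeeping in the third paragraph: a priori, commuting $A(\cdot)D(\cdot)$ through the surviving $B$'s can generate double-exchange terms that would delete two original rapidities and yield $\mathcal{F}_{n-2}$-type contributions absent from \eqref{FN}. The crux is to show that all such terms either cancel between the two summands of the $C$--$B$ commutator or recombine, leaving only the single $\mathcal{F}_{n-1}$ level, and that the surviving coefficients collapse exactly to the compact ratios of $a,b,c$ displayed in \eqref{MN}. This recombination is where the off-diagonal (anti-periodic) twist makes the computation heavier than the periodic case, since both $B$ and $C$ feed the transfer matrix and the admissible contractions must be symmetrised; here I would rely on the detailed cancellations already established in \cite{Galleas_Twists}.
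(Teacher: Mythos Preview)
The paper does not prove Lemma~\ref{FZL} here; it simply imports the result from \cite{Galleas_Twists}, where the functional equations \eqref{FN} are derived by precisely the Algebraic--Functional route you outline: define $\mathcal{F}_n=\bra{\Psi}B(\lambda_1)\cdots B(\lambda_n)\ket{0}$ for a left eigenvector $\bra{\Psi}$ of the anti-periodic transfer matrix $T=B+C$, insert $T(\lambda_0)$, and push $C(\lambda_0)$ through the $B$'s using the Yang--Baxter exchange relations. Your identification of $\mathcal{F}_{n+1}(X^{0,n})$ from the $B(\lambda_0)$ piece, of $\mathcal{F}_L=Z\,\bar{\mathcal{F}}_0$ via $B(\lambda_1)\cdots B(\lambda_L)\ket{0}=Z\ket{\bar 0}$, and of $\mathcal{F}_0=\braket{\Psi}{0}$ all match the paper's conventions (cf.\ \Secref{sec:K0}).

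One correction to your anticipated obstacle: there is no danger of $\mathcal{F}_{n-2}$-type terms. The $C$--$B$ commutator fires exactly once (the term where $C$ reaches $\ket{0}$ vanishes), producing a single pair of diagonal operators $A,D$ and leaving $n-1$ factors of $B$. Commuting $A$ and $D$ through the remaining $B$'s only \emph{exchanges} arguments; it never changes the number of $B$'s. Hence every surviving term is of the form $\bra{\Psi}\prod_{k\in S}B(\lambda_k)\ket{0}$ with $|S|=n-1$ and $S\subset\{0,1,\dots,n\}$: if $0\notin S$ one obtains $\mathcal{F}_{n-1}(X^{1,n}_i)$ with coefficient $M_i^{(n)}$, and if $0\in S$ one obtains $\mathcal{F}_{n-1}(X^{0,n}_{i,j})$ with coefficient $N_{j,i}^{(n)}$. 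The genuine bookkeeping burden is therefore not a level-mixing cancellation but the collection of the many exchange terms into the closed forms \eqref{MN}, which is indeed the computation carried out in \cite{Galleas_Twists}.
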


\begin{rema}
The partition function $Z$ only appears in the system \eqref{FN} for $n=L-1$; whilst $\Lambda$ is present in each equation of the system. In this way, one can expect that the sequence of functions $\{ \mathcal{F}_n \}_{0 \leq n \leq L-1}$ will be required to build a bridge between $Z$ and $\Lambda$.
\end{rema}

\section{The relation $\Lambda \rightarrow Z$} \label{sec:SOL}

One important aspect of the system of equations \eqref{FN} is that its solution depends mostly on the equations' structure rather than on the particular form of the coefficients  $M_i^{(n)}$ and  $N_{j,i}^{(n)}$. That is what we intend to demonstrate in this section and it is worth remarking that this feature has been previously exploited in \cite{Galleas_2016a, Galleas_2016b, Galleas_2016c} for similar equations.
This section will be necessarily technical but, before we go through the details, it is important to first precise what we mean by \emph{solution}.
In \eqref{FN} we have assumed the function $\Lambda(\lambda)$ is given and the unknown functions are then the set $\{ \mathcal{F}_n \}_{1 \leq n \leq L}$, keeping in mind $\mathcal{F}_L \sim Z$. Therefore, we would like to express the functions $\mathcal{F}_n$ in terms of $\Lambda$, $M_i^{(n)}$ and $N_{j,i}^{(n)}$ in the simplest possible way.
Although it is doable through our approach, here we will not attempt to presenting a formula for each function $\mathcal{F}_n$ since their interpretation in terms of vertex models is still unclear to us. However, the function $\mathcal{F}_L$ is essentially the partition function $Z$ and we will focus on that case.

The previous works \cite{Galleas_2016a, Galleas_2016b, Galleas_2016c} have unveiled some useful attributes of functional equations with structure similar to \eqref{FN}. For instance, they not only present themselves as functional equations but they also exhibit the structure of linear algebraic equations. This feature has been exploited in the aforementioned works where we have shown that determinantal solutions can be readily written down with little (or almost none) dependence on the explicit form of its coefficients.
Here we intend to show \eqref{FN} can be tackled along the same lines and, at the end of the day, we will find a solutions
$\mathcal{F}_L$ (or  $Z$) as the determinant of a matrix with entries depending solely on $\Lambda$, $M_i^{(n)}$ and $N_{j,i}^{(n)}$. 
In order to explain our methodology we shall first address the particular cases $L = 2 , 3 , 4$ before describing the general strategy.

\subsection{Case $L=2$} \label{sec:L2}

This is the first non-trivial instance of the system of equations \eqref{FN}. In that case we have
\< \label{FN2}
\Lambda(\lambda_0) \mathcal{F}_0 &=& \mathcal{F}_1 (\lambda_0) \nonumber \\
\Lambda(\lambda_0) \mathcal{F}_1 (\lambda_1) &=& Z(\lambda_0, \lambda_1) \bar{\mathcal{F}}_0 + M_1^{(1)}(\lambda_0, \lambda_1) \mathcal{F}_0 \; . 
\>
Then, recalling that $\lambda_j$ is a generic complex variable, we can rewrite \eqref{FN2} in matricial form as
\<
\begin{pmatrix} 
\Lambda (\lambda_1) & -1 \\ M_1^{(1)}(\lambda_0, \lambda_1) & - \Lambda(\lambda_0) 
\end{pmatrix}
\begin{pmatrix}
\mathcal{F}_0 \\  \mathcal{F}_1 (\lambda_1) 
\end{pmatrix} = 
\begin{pmatrix}
0 \\ - Z(\lambda_0, \lambda_1) \bar{\mathcal{F}}_0 
\end{pmatrix} \; ,
\>
which can be readily solved using Cramer's rule. In this way we find
\<
\mathcal{F}_0 = - \frac{Z(\lambda_0, \lambda_1) \bar{\mathcal{F}}_0}{ \mathrm{det} \begin{pmatrix} 
\Lambda (\lambda_1) & -1 \\ M_1^{(1)}(\lambda_0, \lambda_1) & - \Lambda(\lambda_0) 
\end{pmatrix}} \; ,
\>
or equivalently,
\< \label{Z2}
Z(\lambda_0, \lambda_1) = \kappa_0 \; \mathrm{det} \begin{pmatrix} 
\Lambda (\lambda_1) & 1 \\ M_1^{(1)}(\lambda_0, \lambda_1) & \Lambda(\lambda_0) \end{pmatrix}
\>
with $\kappa_0 \coloneqq  \mathcal{F}_0 / \bar{\mathcal{F}}_0$. Formula \eqref{Z2} then fulfills the goal of expressing $Z$ as a single determinant involving solely any eigenvalue $\Lambda$ and the function $M_1^{(1)}$. It is also important to point out that formula \eqref{Z2} is a direct consequence of the structure of \eqref{FN2}, and that its derivation has not made use of the explicit form of the function $M_1^{(1)}$ at any point.
Moreover, as for $L=2$ one can verify the function $M_1^{(1)}$ entering formula \eqref{Z2} simplifies to 
\<
M_1^{(1)} (\lambda_0, \lambda_1) &=& -  \frac{\sinh{(\gamma)}^2}{2} \left[ \cosh{(\lambda_0 - \lambda_1 + \gamma)} + \cosh{(\lambda_1 - \lambda_0 + \gamma)} \right. \nonumber \\
&& \qquad \qquad \qquad \qquad \quad \left. - \; \sum_{j=1}^2 \cosh{(\lambda_0 + \lambda_1 + \gamma - 2 \mu_j)} \right] .
\>
Although formula \eqref{MN} for $M_1^{(1)}$ seems to exhibit a simple pole when $\lambda_0, \lambda_1 \to \lambda$, we can see this is not the case for $L=2$. In this way we simply have 
\[ \label{Z2homo}
Z(\lambda, \lambda) = \kappa_0 \left[ \Lambda(\lambda)^2 -  M_1^{(1)} (\lambda, \lambda)   \right] 
\]
in the \emph{homogeneous limit} $\lambda_0, \lambda_1 \to \lambda$.

Both formulae \eqref{Z2} and \eqref{Z2homo} can not be regarded as explicit expressions since one still need to input an eigenvalue $\Lambda$ of the transfer matrix $T$. The spectrum of the latter comprises four eigenvalues for $L=2$; which can be obtained from the direct diagonalization of $T$. In Table \ref{tab:L2} we have listed such eigenvalues and used them to compare formula \eqref{Z2} with the explicit evaluation of $Z$ from its definition (sum over configurations). More precisely, we have compared \eqref{Z2} with
\[
Z(\lambda_0, \lambda_1) = c^2 \left[ b(\lambda_0 - \mu_1) b(\lambda_1 - \mu_2) + a(\lambda_0 - \mu_2) a(\lambda_1 - \mu_1)  \right]
\]
which shows our determinantal formula is fulfilled with the respective parameter $\kappa_0$ also given in Table \ref{tab:L2}. 

\begin{table}[h]
\caption{Eigenvalues $\Lambda$ for $L=2$.}
\label{tab:L2}
\begin{center}
\begin{tabular}{|c|c|}
\hline
$\kappa_0$ & $\Lambda(\lambda)$ \\
\hline
1 & $\sqrt{2} \sinh{(\gamma)} \sqrt{\cosh{(\gamma)}+\cosh{(\mu_1 - \mu_2)}} \sinh{\left(\frac{1}{2}(\gamma-\mu_1 -\mu_2 + 2 \lambda)\right)}$ \\
\hline
1 & $-\sqrt{2} \sinh{(\gamma)} \sqrt{\cosh{(\gamma)}+\cosh{(\mu_1 - \mu_2)}} \sinh{\left(\frac{1}{2}(\gamma-\mu_1 -\mu_2 + 2 \lambda)\right)}$  \\
\hline
-1 & $\ii \sqrt{2} \sinh{(\gamma)} \sqrt{\cosh{(\gamma)}-\cosh{(\mu_1 - \mu_2)}} \cosh{\left(\frac{1}{2}(\gamma-\mu_1 -\mu_2 + 2 \lambda)\right)}$  \\
\hline
-1 & $-\ii \sqrt{2} \sinh{(\gamma)} \sqrt{\cosh{(\gamma)}-\cosh{(\mu_1 - \mu_2)}} \cosh{\left(\frac{1}{2}(\gamma-\mu_1 -\mu_2 + 2 \lambda)\right)}$ \\
\hline
\end{tabular}
\end{center}
\end{table}

\subsection{Case $L=3$} \label{sec:L3}

We then proceed with a detailed analysis of the case $L=3$. In that case \eqref{FN} comprises the following three equations:
\< \label{FN3}
\Lambda(\lambda_0) \mathcal{F}_0 &=& \mathcal{F}_1 (\lambda_0) \nonumber \\
\Lambda(\lambda_0) \mathcal{F}_1 (\lambda_1) &=& \mathcal{F}_2 (\lambda_0, \lambda_1)  + M_1^{(1)}(\lambda_0, \lambda_1) \mathcal{F}_0 \nonumber \\
\Lambda(\lambda_0) \mathcal{F}_2 (\lambda_1, \lambda_2) &=& Z(\lambda_0, \lambda_1 , \lambda_2) \; \bar{\mathcal{F}}_0 + M_1^{(2)}(\lambda_0, \lambda_1 , \lambda_2) \; \mathcal{F}_1 (\lambda_2) \nonumber \\
&& + \; M_2^{(2)}(\lambda_0, \lambda_1 , \lambda_2) \; \mathcal{F}_1 (\lambda_1) + N_{2,1}^{(2)}(\lambda_0, \lambda_1 , \lambda_2) \; \mathcal{F}_1 (\lambda_0) \; .
\>

The generalization of the analysis performed in the case $L=2$ for the system of equations \eqref{FN3} is not so straightforward. 
This is mainly due to the fact that one first needs to select appropriate equations and variables among several possibilities encoded in \eqref{FN3}. 
In order to choose suitable variables we recall again that $\lambda_j \in \C$ is generic and first consider the following sub-system of equations already written in matricial form,
\< \label{FN3a}
\begin{pmatrix} 
\Lambda (\lambda_2) & -1 & 0 & 0 \\ 
\Lambda (\lambda_1) & 0 & -1 & 0 \\
\Lambda (\lambda_0) & 0 & 0 & -1 \\
0 & M_1^{(2)} & M_2^{(2)} & N_{2,1}^{(2)}
\end{pmatrix}
\begin{pmatrix}
\mathcal{F}_0 \\  \mathcal{F}_1 (\lambda_2) \\  \mathcal{F}_1 (\lambda_1) \\  \mathcal{F}_1 (\lambda_0)
\end{pmatrix} = 
\begin{pmatrix}
0 \\ 0 \\ 0 \\ \Lambda(\lambda_0) \mathcal{F}_2 (\lambda_1, \lambda_2) -  Z(\lambda_0, \lambda_1 , \lambda_2) \; \bar{\mathcal{F}}_0
\end{pmatrix} \; . \nonumber \\
\>
In \eqref{FN3a} the first three rows corresponds to the first equation of \eqref{FN3} with different spectral parameters. The last row is simply the last equation of \eqref{FN3}.

Now one can use Cramer's rule to solve \eqref{FN3a} for $\mathcal{F}_0$. By doing so we obtain the relation
\< \label{FN3A}
\Lambda(\lambda_0) \mathcal{F}_2 (\lambda_1, \lambda_2) -  Z(\lambda_0, \lambda_1 , \lambda_2) \; \bar{\mathcal{F}}_0 = 
\mathcal{F}_0 \; \mathrm{det} \begin{pmatrix} 
\Lambda (\lambda_2) & -1 & 0 & 0 \\ 
\Lambda (\lambda_1) & 0 & -1 & 0 \\
\Lambda (\lambda_0) & 0 & 0 & -1 \\
0 & M_1^{(2)} & M_2^{(2)} & N_{2,1}^{(2)}
\end{pmatrix} \; ,
\>
which is close to the type of formula we are after but not quite there yet. In order to complete our formula we still need a suitable expression for 
$\mathcal{F}_2$. Such formula can be found along the same lines; more precisely, by replacing the last row of \eqref{FN3a} by the second equation in
\eqref{FN3} with adjusted spectral parameters. Hence, we consider the sub-system of equations
\< \label{FN3b}
\begin{pmatrix} 
\Lambda (\lambda_2) & -1 & 0 & 0 \\ 
\Lambda (\lambda_1) & 0 & -1 & 0 \\
\Lambda (\lambda_0) & 0 & 0 & -1 \\
- M_1^{(1)}(\lambda_1 , \lambda_2) & \Lambda(\lambda_1) & 0 & 0 
\end{pmatrix}
\begin{pmatrix}
\mathcal{F}_0 \\  \mathcal{F}_1 (\lambda_2) \\  \mathcal{F}_1 (\lambda_1) \\  \mathcal{F}_1 (\lambda_0)
\end{pmatrix} = 
\begin{pmatrix}
0 \\ 0 \\ 0 \\ \mathcal{F}_2 (\lambda_1, \lambda_2) 
\end{pmatrix}  \nonumber \\
\>
whose solution for $\mathcal{F}_0$ gives
\< \label{FN3B}
\mathcal{F}_2 (\lambda_1, \lambda_2) = \mathcal{F}_0 \; \mathrm{det} \begin{pmatrix} 
\Lambda (\lambda_2) & -1 & 0 & 0 \\ 
\Lambda (\lambda_1) & 0 & -1 & 0 \\
\Lambda (\lambda_0) & 0 & 0 & -1 \\
- M_1^{(1)}(\lambda_1 , \lambda_2) & \Lambda(\lambda_1) & 0 & 0 
\end{pmatrix} \; .
\>
The determinants appearing in \eqref{FN3A} and \eqref{FN3B} only differ by the last row and they can be added using elementary properties of determinants. In this way we are left with the formula
\< \label{Z3}
Z(\lambda_0, \lambda_1 , \lambda_2) = \kappa_0 \; \mathrm{det} \begin{pmatrix} 
\Lambda (\lambda_2) & -1 & 0 & 0 \\ 
\Lambda (\lambda_1) & 0 & -1 & 0 \\
\Lambda (\lambda_0) & 0 & 0 & -1 \\
- M_1^{(1)}(\lambda_1 , \lambda_2) \Lambda(\lambda_0)  & \Lambda(\lambda_0) \Lambda(\lambda_1) - M_1^{(2)}  & - M_2^{(2)} & - N_{2,1}^{(2)}
\end{pmatrix} \; , \nonumber \\
\>
with $M_i^{(2)} = M_i^{(2)} (\lambda_0 , \lambda_1 , \lambda_2)$ and $N_{2,1}^{(2)} = N_{2,1}^{(2)} (\lambda_0 , \lambda_1 , \lambda_2)$.

Similarly to the analysis performed for $L=2$, we also would like to compare the determinantal expression \eqref{Z3} with the explicit evaluation of $Z$ from its definition. For that we first need to compute the eigenvalues of the transfer matrix $T$ for $L=3$. In order to avoid cumbersome expressions we then set $\mu_j = 0$ and present the aforementioned eigenvalues in Table \ref{tab:L3}. Next we need an explicit expression for $Z$ computed directly from its definition. As for $L=3$ and $\mu_j = 0$ we have
\< \label{Z3r}
Z(\lambda_0, \lambda_1 , \lambda_2) &=& c^3 \left[c^2 a\left(\lambda _0\right) a\left(\lambda _2\right) b\left(\lambda_0\right) b\left(\lambda_2\right)+a\left(\lambda_0\right) a\left(\lambda_1\right) b\left(\lambda_0\right) b\left(\lambda_1\right) b\left(\lambda_2\right){}^2 \right. \nonumber \\
&& \left. + \; a\left(\lambda_0\right) a\left(\lambda_1\right) a\left(\lambda_2\right){}^2 b\left(\lambda_0\right) b\left(\lambda_1\right)+a\left(\lambda_1\right) a\left(\lambda_2\right) b\left(\lambda_0\right){}^2 b\left(\lambda_1\right) b\left(\lambda_2\right) \right. \nonumber \\
&& \left. + \; a\left(\lambda_0\right){}^2 a\left(\lambda_1\right) a\left(\lambda_2\right) b\left(\lambda_1\right) b\left(\lambda_2\right)+a\left(\lambda_0\right){}^2 a\left(\lambda_1\right){}^2 a\left(\lambda_2\right){}^2 \right. \nonumber \\
&& \left. + \; b\left(\lambda_0\right){}^2 b\left(\lambda_1\right){}^2 b\left(\lambda_2\right){}^2\right] 
\>
which allows a prompt comparison with \eqref{Z3}. By doing so we then find a perfect agreement between \eqref{Z3} and \eqref{Z3r} with respective parameters $\kappa_0$ also included in Table \ref{tab:L3}.

\begin{table}[h]
\caption{Eigenvalues $\Lambda$ for $L=3$.}
\label{tab:L3}
\begin{center}
\begin{tabular}{|c|c|}
\hline
$\kappa_0$ & $\Lambda(\lambda)$ \\
\hline
&  \\[-1em]
$1$ & $\frac{1}{4} \sinh{(\gamma )} \left[-\left(1+\ii \sqrt{3}\right) \cosh{ (2 (\gamma +\lambda ))}+\ii \left(\sqrt{3}+\ii\right) \cosh{ (2 \lambda )}+2\right]$ \\[1mm]
\hline
&  \\[-1em]
$-1$ & $\frac{1}{4} \sinh{ (\gamma )} \left[\left(1-\ii \sqrt{3}\right)  \cosh{ (2 (\gamma +\lambda ))} + \left(1+\ii \sqrt{3}\right) \cosh{ (2 \lambda )}-2\right]$ \\[1mm]
\hline
&  \\[-1em]
$1$ & $\frac{1}{4} \sinh{ (\gamma )} \left[i \left(\sqrt{3} + \ii \right) \cosh{(2 (\gamma +\lambda ))} - \left(1 + \ii \sqrt{3}\right) \cosh{ (2 \lambda )}+2\right]$ \\[1mm]
\hline
&  \\[-1em]
$-1$ & $\frac{1}{4} \sinh{(\gamma )} \left[\left(1 + \ii \sqrt{3}\right) \cosh{(2 (\gamma +\lambda))}+\left(1- \ii \sqrt{3}\right) \cosh{(2 \lambda )}-2\right]$ \\[1mm]
\hline 
&  \\[-1em]
\multirow{2}{*}{$1$} & $\frac{1}{4} \sinh{(\gamma )} \left[\cosh{(2 (\gamma +\lambda))} - 2 \sqrt{2} \sqrt{\cosh{ (2 \gamma )}+7} \sinh{ (\lambda )} \sinh{ (\gamma +\lambda )} \right. $ \\ &  $\left. + \cosh{(2 \gamma )} + \cosh{(2 \lambda)} - 3\right]$ \\
\hline
&  \\[-1em]
\multirow{2}{*}{$-1$} & $-\frac{1}{4} \sinh{ (\gamma )} \left[\cosh{ (2 (\gamma +\lambda))} + 2 \sqrt{2} \sqrt{\cosh{(2 \gamma )}+7} \sinh{ (\lambda )} \sinh{(\gamma +\lambda )} \right.$ \\ & $ \left. + \cosh{(2 \gamma )} + \cosh{(2 \lambda )}-3\right]$ \\
\hline
&  \\[-1em]
\multirow{2}{*}{$1$} & $\frac{1}{4} \sinh{(\gamma )} \left[\cosh{ (2 (\gamma +\lambda))} + 2 \sqrt{2} \sqrt{\cosh{(2 \gamma )} + 7} \sinh{ (\lambda )} \sinh{ (\gamma +\lambda )} \right.$ \\ & $\left. + \cosh{ (2 \gamma )} + \cosh{ (2 \lambda)} - 3\right]$ \\
\hline
&  \\[-1em]
\multirow{2}{*}{$-1$} & $-\frac{1}{4} \sinh{(\gamma )} \left[\cosh{(2 (\gamma +\lambda ))} - 2 \sqrt{2} \sqrt{\cosh{(2 \gamma )} + 7} \sinh{(\lambda )} \sinh{(\gamma +\lambda )} \right.$ \\ & $\left. + \cosh{(2 \gamma )} + \cosh{(2 \lambda)} - 3\right]$ \\
\hline
\end{tabular}
\end{center}
\end{table}

As far as the homogeneous limit is concerned, more precisely the limits $\lambda_j \to \lambda$ and $\mu_j \to \mu$; formula \eqref{Z3} is not so friendly as its counterpart \eqref{Z2} for $L=2$. This is mainly due to the presence of simple poles in the coefficients $M_i^{(2)}$ and $N_{2,1}^{(2)}$ when
$\lambda_j \to \lambda$. Nevertheless, the limit $\mu_j \to \mu$ is still trivial in formula \eqref{Z3}.

\subsection{Case $L=4$} \label{sec:L4}

This case is of particular importance because it reveals the structure we need to consider in order to generalize formulae \eqref{Z2} and \eqref{Z3}
for arbitrary lattice length $L$. Similarly to the previous cases, we start this subsection by writing down explicitly the system of equations \eqref{FN} for $L=4$. Here, however, we split them into two blocks from the start. The first one consists of the equations $n=0$ and $n=1$, namely
\< \label{FN4a}
\Lambda(\lambda_0) \mathcal{F}_0 &=& \mathcal{F}_1 (\lambda_0) \nonumber \\
\Lambda(\lambda_0) \mathcal{F}_1 (\lambda_1) &=& \mathcal{F}_2 (\lambda_0, \lambda_1)  + M_1^{(1)} \; \mathcal{F}_0 \; .
\>
The second block is then formed by the remaining two equations $n=2$ and $n=3$. More precisely, the second block is given by
\< \label{FN4b}
\Lambda(\lambda_0) \mathcal{F}_2 (\lambda_1, \lambda_2) &=& \mathcal{F}_3 (\lambda_0, \lambda_1 , \lambda_2) + M_1^{(2)} \; \mathcal{F}_1 (\lambda_2)  +  M_2^{(2)} \; \mathcal{F}_1 (\lambda_1) + N_{2,1}^{(2)}\; \mathcal{F}_1 (\lambda_0) \nonumber \\
\Lambda(\lambda_0) \mathcal{F}_3 (\lambda_1, \lambda_2, \lambda_3) &=& Z (\lambda_0, \lambda_1 , \lambda_2, \lambda_3) \bar{\mathcal{F}}_0 
+ M_1^{(3)} \; \mathcal{F}_2 (\lambda_2, \lambda_3) + M_2^{(3)} \; \mathcal{F}_2 (\lambda_1, \lambda_3) \nonumber \\
&& + \; M_3^{(3)} \; \mathcal{F}_2 (\lambda_1, \lambda_2) +  N_{2,1}^{(3)} \; \mathcal{F}_2 (\lambda_0, \lambda_3) + N_{3,1}^{(3)} \; \mathcal{F}_2 (\lambda_0, \lambda_2) \nonumber \\
&& + \; N_{3,2}^{(3)} \; \mathcal{F}_2 (\lambda_0, \lambda_1) \; . 
\>

Next we need to identify suitable variables for extending the procedure previously put forward in the cases $L=2$ and $L=3$. For that we turn our attention to the last equation of \eqref{FN4b} and select all the terms in its RHS which does not involve the partition function $Z$. More precisely, we single out the functions $\mathcal{F}_2 (\lambda_i, \lambda_j)$ with indexes on the interval $0 \leq i < j \leq 3$. Each one of those functions can now be written down in terms of $\mathcal{F}_1$ and $\mathcal{F}_0$ through the last equation of \eqref{FN4a}. Similarly, each function $\mathcal{F}_1$ can then be written down in terms of $\mathcal{F}_0$ according to the first equation of \eqref{FN4a}. 
This inspection suggests one can write a closed sub-system of equations involving the variables $\mathcal{F}_2 (\lambda_i, \lambda_j)$ with $0 \leq i < j \leq 3$, $\mathcal{F}_1 (\lambda_i)$ with $1 \leq i \leq 3$ and $\mathcal{F}_0$. The equations required for that are the last equation of the second block \eqref{FN4b} and the whole first block \eqref{FN4a} with permuted spectral parameters $\lambda_k$. Hence, we end up with a system of ten equations for the afore described ten variables. Such system can be solved using Cramer's rule and the solution for $\mathcal{F}_0$ allows us to write
\[ \label{part4}
\Lambda(\lambda_0) \mathcal{F}_3 (\lambda_1, \lambda_2, \lambda_3) - Z (\lambda_0, \lambda_1 , \lambda_2, \lambda_3) \bar{\mathcal{F}}_0 = \mathcal{F}_0 \; \mathrm{det} \left( \mathcal{C}_4 \right)
\]
with $\mathcal{C}_4$ the coefficient matrix of the above described system of equations. The explicit form of the matrix $\mathcal{C}_4$ will not required at this point. 

We then proceed by looking for a suitable expression for the function $\mathcal{F}_3 (\lambda_1, \lambda_2, \lambda_3)$ appearing in \eqref{part4}. Such expression can be found from the solution of another system of equations similar to the one leading to \eqref{part4}. More precisely, the system we need is obtained by using the first equation of \eqref{FN4b}, with renamed spectral parameters $\lambda_i \mapsto \lambda_{i+1}$, instead of the last equation
in \eqref{FN4b}. All the other equations remain the same. The solution of this latter system of equations for $\mathcal{F}_0$ then gives us the expression
\[ \label{part4b}
\mathcal{F}_3 (\lambda_1, \lambda_2, \lambda_3) = \mathcal{F}_0 \; \mathrm{det} \left( \bar{\mathcal{C}}_4 \right) \; ,
\]
with matrix $\bar{\mathcal{C}}_4$ the coefficient matrix of this second system of equations. It is important to remark that $\mathcal{C}_4$ and $\bar{\mathcal{C}}_4$ only differ by the last row. In this way, their determinants can be neatly combined, and from \eqref{part4} and \eqref{part4b} we find
\[ \label{Z4}
Z ( \lambda_0, \lambda_1, \lambda_2, \lambda_3) = \kappa_0 \; \mathrm{det} \left( \mathcal{H}_4 \right) \; 
\]
with $\mathcal{H}_4$ the following $10\times 10$ matrix,
\< \label{H4}
\scalebox{0.75}{$
\begin{pmatrix}
 \Lambda(\lambda_3) & -1 & 0 & 0 & 0 & 0 & 0 & 0 & 0 & 0 \\
 \Lambda(\lambda_2) & 0 & -1 & 0 & 0 & 0 & 0 & 0 & 0 & 0 \\
 \Lambda(\lambda_1) & 0 & 0 & -1 & 0 & 0 & 0 & 0 & 0 & 0 \\
 -M_1^{(1)} (\lambda_2, \lambda_3) & \Lambda(\lambda_2) & 0 & 0 & -1 & 0 & 0 & 0 & 0 & 0 \\
 -M_1^{(1)} (\lambda_1, \lambda_3) & \Lambda(\lambda_1) & 0 & 0 & 0 & -1 & 0 & 0 & 0 & 0 \\
 -M_1^{(1)} (\lambda_1, \lambda_2)  & 0 & \Lambda(\lambda_1) & 0 & 0 & 0 & -1 & 0 & 0 & 0 \\
 -M_1^{(1)} (\lambda_0, \lambda_3)  & \Lambda(\lambda_0) & 0 & 0 & 0 & 0 & 0 & -1 & 0 & 0 \\
 -M_1^{(1)} (\lambda_0, \lambda_2)  & 0 & \Lambda(\lambda_0) & 0 & 0 & 0 & 0 & 0 & -1 & 0 \\
 -M_1^{(1)} (\lambda_0, \lambda_1)  & 0 & 0 & \Lambda(\lambda_0) & 0 & 0 & 0 & 0 & 0 & -1 \\
 0 & -\Lambda(\lambda_0) M_1^{(2)} & -\Lambda(\lambda_0) M_2^{(2)} & -\Lambda(\lambda_0) N_{2,1}^{(2)} & \Lambda(\lambda_0) \Lambda(\lambda_1) - M_1^{(3)} & - M_2^{(3)} & - M_3^{(3)} & - N_{2,1}^{(3)} & - N_{3,1}^{(3)} & - N_{3,2}^{(3)} \\
\end{pmatrix} $} \; . \nonumber \\
\>
As for the dependence on the spectral parameters, in \eqref{H4} we have used the conventions $M_i^{(2)} = M_i^{(2)} (\lambda_1, \lambda_2, \lambda_3)$,
$N_{2,1}^{(2)} = N_{2,1}^{(2)} (\lambda_1, \lambda_2, \lambda_3)$, $M_i^{(3)} = M_i^{(3)} (\lambda_0, \lambda_1, \lambda_2, \lambda_3)$ and $N_{j,i}^{(3)} = N_{j,i}^{(3)} (\lambda_0, \lambda_1, \lambda_2, \lambda_3)$.

\begin{rema}
Similarly to the case $L=3$, the partial homogeneous limit $\mu_j \to \mu$ is trivial in formula \eqref{Z4}. However, some matrix entries of $\mathcal{H}_4$
exhibit simple poles when $\lambda_j \to \lambda$ and this makes the analysis of the complete homogeneous limit of formula \eqref{Z4} more involving. 
\end{rema}

\subsection{General case} \label{sec:LL}

As for arbitrary values of the lattice length $L$ we will simply proceed along the lines described in \Secref{sec:L4}.
We start by separating the system of equations \eqref{FN} into two blocks. In the first block we put the equations
\< \label{FNa}
\Lambda (\lambda_0) \mathcal{F}_n (X^{1,n}) &=& \mathcal{F}_{n+1} (X^{0,n}) + \sum_{1 \leq i \leq n} M_i^{(n)} \mathcal{F}_{n-1} (X^{1,n}_i) \nonumber\\
&& + \; \sum_{1 \leq i < j \leq n} N_{j,i}^{(n)} \mathcal{F}_{n-1} (X^{0,n}_{i,j})
\>
for  $0 \leq n \leq L-3$. The remaining equations, namely \eqref{FN} with $n=L-2$ and $n=L-1$, will constitute the second block. More precisely, the second contains the relations
\< \label{FNb}
\Lambda (\lambda_0) \mathcal{F}_{L-2} (X^{1,L-2}) &=& \mathcal{F}_{L-1} (X^{0,L-2}) + \sum_{1 \leq i \leq L-2} M_i^{(L-2)} \mathcal{F}_{L-3} (X^{1,L-2}_i) \nonumber\\
&& + \; \sum_{1 \leq i < j \leq L-2} N_{j,i}^{(L-2)} \mathcal{F}_{L-3} (X^{0,L-2}_{i,j}) \nonumber \\
\Lambda (\lambda_0) \mathcal{F}_{L-1} (X^{1,L-1}) &=& Z (X^{0,L-1}) \bar{\mathcal{F}}_0 + \sum_{1 \leq i \leq L-1} M_i^{(L-1)} \mathcal{F}_{L-2} (X^{1,L-1}_i) \nonumber\\
&& + \; \sum_{1 \leq i < j \leq L-1} N_{j,i}^{(L-1)} \mathcal{F}_{L-2} (X^{0,L-1}_{i,j}) \; .
\>
We will also need to extend Definition \ref{VAR} in order to proceed.

\begin{defi}
Write  $X^{a,b}_{i_1, i_2, \dots, i_m} \coloneqq X^{a,b} \backslash \{ \lambda_{i_1} , \lambda_{i_2} , \dots , \lambda_{i_m} \}$ for a reduced set of symmetric variables.
\end{defi}

The next step is to recast \eqref{FNa} and \eqref{FNb} in a convenient matricial form. For that we use as variable the vector with entries
\[ \label{Va}
\mathcal{F}_{L-m} (X^{0,L-1}_{i_1 ,i_2 , \dots, i_m}) \quad \text{with} \quad 0 \leq i_1 < i_2 < \dots < i_m \leq L-1 \quad \text{if} \quad m \in 2 \Z \; ,
\]
and
\[ \label{Vb}
\mathcal{F}_{L-m} (X^{1,L-1}_{i_1 ,i_2 , \dots, i_{m-1}}) \quad \text{with} \quad 1 \leq i_1 < i_2 < \dots < i_{m-1} \leq L-1 \quad \text{if} \quad m \in 2 \Z +1 \; .
\]
The index $m$ in \eqref{Va} and \eqref{Vb} runs in decreasing order as $m = L, L-1, \dots, 2$.

\begin{rema}
Formulae \eqref{Va} and \eqref{Vb} gives rise to a vector with $3 \cdot 2^{L-2} - 2$ components.
\end{rema}

\begin{lem} \label{Zlema}
The partition function $Z$ can be written as 
\[ \label{ZL}
Z(\lambda_0 , \lambda_1, \dots , \lambda_{L-1}) = \kappa_0 \; \mathrm{det} \left( \mathcal{H}_L \right)
\]
with $\mathcal{H}_L$ the following tridiagonal block matrix 
\< \label{HL}
\mathcal{H}_L  = \begin{pmatrix}
\mathcal{A}_{0}^{(L)} & \mathcal{A}_{+}^{(L)} & & & & & \\
\mathcal{A}_{-}^{(L-1)} & \mathcal{A}_{0}^{(L-1)} & \mathcal{A}_{+}^{(L-1)} & & & & \\
& \ddots & \ddots & \ddots &  &  & \\
& & \mathcal{A}_{-}^{(m)} & \mathcal{A}_{0}^{(m)} & \mathcal{A}_{+}^{(m)} & & \\
& &  &\ddots & \ddots & \ddots &    \\
& & & & \mathcal{A}_{-}^{(3)} & \mathcal{A}_{0}^{(3)} &  \mathcal{A}_{+}^{(3)} \\
& &  &  & & \mathcal{A}_{-}^{(2)} & \mathcal{A}_{0}^{(2)} \\
\end{pmatrix} \; .
\>
As for $m=3,4, \dots , L$ and $\mathbb{1}_d$ the $d \times d$ identity matrix we have $\mathcal{A}_{+}^{(m)} \coloneqq - \mathbb{1}_{[m]}$ with 
\<
[m] \coloneqq \begin{cases}
\frac{(L-1)!}{(m-2)! (L-m+1)!} \qquad \text{for} \quad m \in 2 \Z \\
\frac{L!}{(m-1)! (L-m+1)!} \qquad \text{for} \quad m \in 2 \Z +1 
\end{cases} \; .
\>
In its turn, $\mathcal{A}_{0}^{(m)}$ for  $m > 2$ is a $[m] \times [m+1]$ matrix given by
\<
\mathcal{A}_{0}^{(m)} = \begin{pmatrix}
\mathcal{W}_{m}^{(m)} & 0 & \dots & 0 \\
\mathcal{W}_{m-1}^{(m)} & 0 & \dots & 0 \\
\vdots & 0 & \dots & 0 \\
\mathcal{W}_{3}^{(m)} & 0 & \dots & 0 \\
\mathcal{W}_{2}^{(m)} & 0 & \dots & 0 
\end{pmatrix} \quad \text{for} \; m \; \text{even; and by} \;\;  
\mathcal{A}_{0}^{(m)} = \begin{pmatrix}
\mathcal{W}_{m}^{(m)} & 0 & \dots & 0 \\
\mathcal{W}_{m-1}^{(m)} & 0 & \dots & 0 \\
\vdots & 0 & \dots & 0 \\
\mathcal{W}_{2}^{(m)} & 0 & \dots & 0 \\
\mathcal{W}_{1}^{(m)} & 0 & \dots & 0 
\end{pmatrix} \nonumber \\
\>
for odd values of $m$. In both cases $\mathcal{W}_k^{(m)} \coloneqq \Lambda(\lambda_{k-1}) \mathbb{1}_{\{ m; k \}}$ with 
\[
\{ m; k \} \coloneqq \frac{(L-k)!}{(m-k)! (L-m)!} \; .
\]

As for $m>2$, the matrices $\mathcal{A}_{-}^{(m)}$ have dimension $[m] \times [m+2]$. Their entries for $m \in 2 \Z$ read 
\< \label{am}
&& \left( \mathcal{A}_{-}^{(m)} \right)_{r_{i_1, i_2, \dots , i_{m-2}}, \; s_{j_1, j_2, \dots , j_{m}}} = - \prod_{t=1}^{m-2} \delta_{i_t , j_t} \nonumber \\
&& \qquad\qquad \times
\begin{cases}
M_k^{(L-m)} \left( \mathcal{O} (X^{1,L-1}_{i_1 ,i_2 , \dots, i_{m-2}}) \right) \quad \text{if} \;\; \mathcal{O} (X^{1,L-1}_{i_1 ,i_2 , \dots, i_{m-2}})_{1,k+1} = \mathcal{O} ( X^{1,L-1}_{j_1 ,j_2 , \dots, j_{m}}) \nonumber \\
N_{l,k}^{(L-m)} \left( \mathcal{O} (X^{1,L-1}_{i_1 ,i_2 , \dots, i_{m-2}}) \right) \quad \text{if} \;\; \mathcal{O} (X^{1,L-1}_{i_1 ,i_2 , \dots, i_{m-2}})_{k+1, l+1} = \mathcal{O} ( X^{1,L-1}_{j_1 ,j_2 , \dots, j_{m}}) \nonumber \\
\end{cases} \; .
\>
In \eqref{am} $r_{i_1, i_2, \dots , i_{m-2}} \colon \Z^{\times (m-2)} \to \Z$ corresponds to the integer assigned to the ordering $1 \leq i_1 < i_2 < \dots < i_{m-2} \leq L-1$; and similarly $s_{j_1, j_2, \dots , j_{m}} \colon \Z^{\times m} \to \Z$ assigns an integer to the ordering $1 \leq j_1 < j_2 < \dots < j_m \leq L-1$.

Next we assign the integers $\bar{r}_{i_1, i_2, \dots , i_{m-1}} \colon \Z^{\times (m-1)} \to \Z$ to the ordering $0 \leq i_1 < i_2 < \dots < i_{m-1} \leq L-1$ and
$\bar{s}_{j_1, j_2, \dots , j_{m+1}} \colon \Z^{\times (m+1)} \to \Z$ to $0 \leq j_1 < j_2 < \dots < j_{m+1} \leq L-1$.
The matrices $\mathcal{A}_{-}^{(m)}$ for $m \in 2 \Z+1$ and $m>2$ then have entries
\< \label{am}
&& \left( \mathcal{A}_{-}^{(m)} \right)_{\bar{r}_{i_1, i_2, \dots , i_{m-1}}, \; \bar{s}_{j_1, j_2, \dots , j_{m+1}}} = - \prod_{t=1}^{m-1} \delta_{i_t , j_t} \nonumber \\
&& \qquad\qquad \times
\begin{cases}
M_k^{(L-m)} \left( \mathcal{O} (X^{0,L-1}_{i_1 ,i_2 , \dots, i_{m-1}}) \right) \quad \text{if} \;\; \mathcal{O} (X^{0,L-1}_{i_1 ,i_2 , \dots, i_{m-1}})_{1,k+1} = \mathcal{O} ( X^{0,L-1}_{j_1 ,j_2 , \dots, j_{m+1}}) \nonumber \\
N_{l,k}^{(L-m)} \left( \mathcal{O} (X^{0,L-1}_{i_1 ,i_2 , \dots, i_{m-1}}) \right) \quad \text{if} \;\; \mathcal{O} (X^{0,L-1}_{i_1 ,i_2 , \dots, i_{m-1}})_{k+1, l+1} = \mathcal{O} ( X^{0,L-1}_{j_1 ,j_2 , \dots, j_{m+1}}) \nonumber \\
\end{cases} \; .
\>
At last we have 
\< \label{HLf}
&&\mathcal{A}_{-}^{(2)} = - \Lambda(\lambda_0) \left( M_1^{(L-2)} , M_2^{(L-2)} , \dots , M_{L-2}^{(L-2)}, N_{2,1}^{(L-2)}, N_{3,1}^{(L-2)}, \dots, \right. \nonumber \\
&& \qquad\qquad\qquad\qquad\qquad\qquad\qquad\qquad\quad \left. \dots, N_{L-2,1}^{(L-2)} , N_{3,2}^{(L-2)}, \dots, N_{L-2,2}^{(L-2)} , \dots , N_{L-2,L-3}^{(L-2)}  \right) \nonumber \\
&&\mathcal{A}_{0}^{(2)} = - \left( M_1^{(L-1)} - \Lambda(\lambda_0) \Lambda(\lambda_1) , M_2^{(L-1)} , \dots , M_{L-1}^{(L-1)}, N_{2,1}^{(L-1)}, N_{3,1}^{(L-1)}, \dots \right. \nonumber \\
&& \qquad\qquad\qquad\qquad\qquad\qquad\qquad\qquad\quad \left. \dots, N_{L-1,1}^{(L-1)} , N_{3,2}^{(L-1)}, \dots, N_{L-1,2}^{(L-1)} , \dots , N_{L-1,L-2}^{(L-1)}  \right) \nonumber \\
\>
with 
\begin{align} 
M_i^{(L-2)} =& M_i^{(L-2)} (\lambda_1, \lambda_2 , \dots , \lambda_{L-1}) & N_{j,i}^{(L-2)} &= N_{j,i}^{(L-2)} (\lambda_1, \lambda_2 , \dots , \lambda_{L-1}) \nonumber \\
M_i^{(L-1)} =& M_i^{(L-1)} (\lambda_0, \lambda_1 , \dots , \lambda_{L-1}) & N_{j,i}^{(L-1)} &= N_{j,i}^{(L-1)} (\lambda_0, \lambda_1 , \dots , \lambda_{L-1}) \; . \nonumber 
\end{align} 
\end{lem}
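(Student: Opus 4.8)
The plan is to exploit, exactly as in the worked cases $L=2,3,4$, the fact that \eqref{FN} is a system of \emph{linear} algebraic relations among the symmetric functions $\mathcal{F}_n$ once the spectral parameters are regarded as free. First I would collect as unknowns the $3\cdot 2^{L-2}-2$ quantities listed in \eqref{Va} and \eqref{Vb}, ordered by the grading $m=L,L-1,\dots,2$. The governing relations are the whole first block \eqref{FNa}, written for every admissible relabelling of the spectral parameters, together with one of the two equations of the second block \eqref{FNb}. Since the number of relabelled copies of \eqref{FNa} at level $n=L-m$ equals $[m]$ and the total matches the number of unknowns, this produces a square system $A\,\mathbf{x}=\mathbf{b}$ in which every component of $\mathbf{b}$ vanishes except the last, which carries $Z(X^{0,L-1})\bar{\mathcal{F}}_0$ (resp.\ $\mathcal{F}_{L-1}$). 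Cramer's rule, solving for the first unknown $\mathcal{F}_0$, then converts these relations into determinantal identities generalizing \eqref{Z2}, \eqref{FN3A}--\eqref{FN3B} and \eqref{part4}--\eqref{part4b}.

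Concretely, I would run this twice. Taking the last equation of \eqref{FNb} (the one containing $Z$) as the final row gives, by Cramer's rule, a relation of the shape $\Lambda(\lambda_0)\mathcal{F}_{L-1}(X^{1,L-1})-Z(X^{0,L-1})\bar{\mathcal{F}}_0=\mathcal{F}_0\,\det(\mathcal{C}_L)$; taking instead the first equation of \eqref{FNb} (with $\lambda_i\mapsto\lambda_{i+1}$) as the final row gives $\mathcal{F}_{L-1}(X^{1,L-1})=\mathcal{F}_0\,\det(\bar{\mathcal{C}}_L)$. By construction $\mathcal{C}_L$ and $\bar{\mathcal{C}}_L$ share all rows except the last, so multilinearity of the determinant in that row lets me eliminate $\mathcal{F}_{L-1}$: forming $\Lambda(\lambda_0)\det(\bar{\mathcal{C}}_L)-\det(\mathcal{C}_L)$ yields $Z\bar{\mathcal{F}}_0=\mathcal{F}_0\,\det(\mathcal{H}_L)$, i.e.\ \eqref{ZL} with $\kappa_0=\mathcal{F}_0/\bar{\mathcal{F}}_0$, the combined last block-row being exactly $(\mathcal{A}_-^{(2)},\mathcal{A}_0^{(2)})$ of \eqref{HLf}. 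This reproduces the passage from \eqref{FN3A}, \eqref{FN3B} to \eqref{Z3}.

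It then remains to read off the entries of the shared part of $\mathcal{C}_L$. Here the key observation is that \eqref{FNa} at level $n$ couples only $\mathcal{F}_n$ (through the left-hand side $\Lambda(\lambda_0)\mathcal{F}_n$), the single function $\mathcal{F}_{n+1}$, and the functions $\mathcal{F}_{n-1}$; under the grading $m=L-n$ these sit at levels $m$, $m-1$ and $m+1$ respectively, which is precisely why $\mathcal{H}_L$ is block tridiagonal as in \eqref{HL}. I would then match the three couplings termwise: the left-hand side produces the diagonal blocks $\mathcal{A}_0^{(m)}$, with $\mathcal{W}_k^{(m)}=\Lambda(\lambda_{k-1})\mathbb{1}_{\{m;k\}}$ recording which variable plays the role of $\lambda_0$; the term $\mathcal{F}_{n+1}(X^{0,n})$, appearing with coefficient $-1$, produces the super-diagonal $\mathcal{A}_+^{(m)}=-\mathbb{1}_{[m]}$; and the terms $M_i^{(n)}\mathcal{F}_{n-1}(X^{1,n}_i)$ and $N_{j,i}^{(n)}\mathcal{F}_{n-1}(X^{0,n}_{i,j})$ produce the sub-diagonal blocks $\mathcal{A}_-^{(m)}$ of \eqref{am}. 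The index maps $r,s,\bar r,\bar s$ are chosen precisely so that these identifications are literal, in particular so that $\mathcal{A}_+^{(m)}$ is the identity and not merely a permutation.

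The main obstacle is the combinatorial bookkeeping hidden in the last identification, namely verifying \eqref{am}. One must show that relabelling \eqref{FNa}--\eqref{MN} so that the deleted index set is $\{i_1,\dots,i_{m-2}\}$ (resp.\ $\{i_1,\dots,i_{m-1}\}$) sends each term $\mathcal{F}_{n-1}(X^{1,n}_i)$ or $\mathcal{F}_{n-1}(X^{0,n}_{i,j})$ to a \emph{unique} level-$(m+1)$ unknown, with coefficient exactly $M_k^{(L-m)}$ or $N_{l,k}^{(L-m)}$ evaluated at the correct ordered argument, and that the surviving indices agree with the target variable; this is what the factor $\prod_t\delta_{i_t,j_t}$ together with the conditions $\mathcal{O}(\cdots)_{1,k+1}=\mathcal{O}(\cdots)$ and $\mathcal{O}(\cdots)_{k+1,l+1}=\mathcal{O}(\cdots)$ encode. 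Checking that the ordering conventions of Definition \ref{oVAR} make these matches consistent across all block-rows, and that no couplings outside the three diagonals survive, is the delicate step; once it is done, the dimension count of the Remark after \eqref{Vb} guarantees squareness and the construction of $\mathcal{H}_L$ in \eqref{HL}--\eqref{HLf} follows.
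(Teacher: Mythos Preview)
Your proposal is correct and follows essentially the same route as the paper: form two square linear systems from \eqref{FNa} together with one of the equations in \eqref{FNb}, solve each for $\mathcal{F}_0$ via Cramer's rule to obtain \eqref{P1} and \eqref{P2}, and combine the two determinants (differing only in the last row) into $\det(\mathcal{H}_L)$. Your account is in fact more explicit than the paper's, which simply invokes the $L=4$ analysis as a template; in particular your explanation of why $\mathcal{H}_L$ is block tridiagonal (the level-$n$ equation couples only $\mathcal{F}_{n-1},\mathcal{F}_n,\mathcal{F}_{n+1}$) and your identification of what the bookkeeping in \eqref{am} must verify are details the paper leaves to the reader.
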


\begin{proof}
Formula \eqref{ZL} follows from a straightforward generalization of the analysis performed in \Secref{sec:L4} for the case $L=4$. As for generic $L$ we then start by considering the system of equations formed by \eqref{FNa}, with different spectral parameters, and the last equation of \eqref{FNb}. However, in order to completely formulate our system of equations we still need to declare the variables under consideration. As for that we take \eqref{Va} and \eqref{Vb} and 
allocate them as entries of a vector $\vec{\mathcal{F}}$. In this way, one can write the above described system of linear equations as $\mathcal{C}_L \; \vec{\mathcal{F}} = \vec{\mathcal{I}}$,
with $\mathcal{C}_L$ the corresponding matrix of coefficients and 
\[
\vec{\mathcal{I}} \coloneqq \begin{pmatrix}
0 \\ 0 \\ \vdots \\ 0 \\ \Lambda (\lambda_0) \mathcal{F}_{L-1} (X^{1,L-1}) - Z (X^{0,L-1}) \bar{\mathcal{F}}_0
\end{pmatrix} \; .
\]
Next we use Cramer's rule to solve our system of equations for $\mathcal{F}_0$. By doing so we find
\[
\mathcal{F}_0 = \frac{\mathrm{det} \left( \mathcal{C}_L^{(1)} \right) }{\mathrm{det} \left( \mathcal{C}_L \right) }
\]
with $\mathcal{C}_L^{(1)}$ the matrix obtained by replacing the first column of $\mathcal{C}_L$ by the vector $\vec{\mathcal{I}}$. Fortunately, the evaluation of
$\mathrm{det} \left( \mathcal{C}_L^{(1)} \right)$ is trivial and this allows us to write
\[ \label{P1}
\Lambda (\lambda_0) \mathcal{F}_{L-1} (X^{1,L-1}) - Z (X^{0,L-1}) \bar{\mathcal{F}}_0 = \mathcal{F}_0 \; \mathrm{det} \left( \mathcal{C}_L \right) \; .
\]

The next step is to obtain a suitable expression for $\mathcal{F}_{L-1} (X^{1,L-1})$. With that goal in mind we consider a second system of equations formed by \eqref{FNa} and the first equation of \eqref{FNb} with shifted spectral parameters $\lambda_i \mapsto \lambda_{i+1}$. It is important to remark that this second system only differs from the first one by the last equation. We then write $\bar{\mathcal{C}}_L$ for the matrix of coefficients associated to this second system of equations and, by repeating the exact same procedure employed for the first system, we find
\[ \label{P2}
\mathcal{F}_{L-1} (X^{1,L-1}) = \mathcal{F}_0 \; \mathrm{det} \left( \bar{\mathcal{C}}_L \right) \; .
\]
The matrices $\mathcal{C}_L$ and $\bar{\mathcal{C}}_L$ only differ by the last row and this allows us to combine their determinants in a neat way.
In this way, using \eqref{P1} and \eqref{P2} we find \eqref{ZL} with $\kappa_0 = \mathcal{F}_0 / \bar{\mathcal{F}}_0$ and matrix $\mathcal{H}_L$
defined in \eqref{HL}-\eqref{HLf}.
\end{proof} 

\begin{rema}
Formula \eqref{ZL} is valid for $L > 2$ and one can clearly see that the representation \eqref{Z2} for the case $L=2$ is not included in \eqref{ZL}. This is due to the fact that our system of equations in the case $L=2$, namely \eqref{FN2}, can not be distributed into the two blocks \eqref{FNa} and \eqref{FNb}. The latter structure is ultimately responsible for the general formula \eqref{ZL}.
\end{rema}

The representation for the partition function $Z$ described in Lemma \ref{Zlema} exhibits some unusual features whose consequences will be addressed in more details in 
\Secref{sec:LAM}. However, here it is worth mentioning  that such representation is valid for any eigenvalue $\Lambda$ of the transfer matrix of the anti-periodic six-vertex model. Hence, it comprises $2^L$ determinantal representations and it can be regarded as an \emph{invariant} in the space of the transfer matrix's eigenvalues.

\subsection{The coefficient $\kappa_0$} \label{sec:K0}

Given the goals of this paper, formula \eqref{ZL} is not complete until we determine the coefficient $\kappa_0$. The latter has been defined as the ratio
$\mathcal{F}_0 /\bar{\mathcal{F}}_0$ and we recall that $\mathcal{F}_0$ and $\bar{\mathcal{F}}_0$ are projections of the transfer matrix eigenvector associated to the particular
eigenvalue $\Lambda$ entering \eqref{ZL} on the $\mathfrak{sl}_2$ highest and lowest weight vectors respectively.
More precisely, as described in \cite{Galleas_Twists}, we have $\mathcal{F}_0 = \braket{\Psi}{0}$ and $\bar{\mathcal{F}}_0 = \braket{\Psi}{\bar{0}}$. The vectors 
$\ket{0}$ and $\ket{\bar{0}}$ are respectively the $\mathfrak{sl}_2$ highest and lowest weight vectors, whilst $\ket{\Psi}$ solves the eigenvalue equation
$T(\lambda) \ket{\Psi} = \Lambda(\lambda) \ket{\Psi}$. 
In this way, $\kappa_0$ is intrinsically associated to the particular eigenvalue $\Lambda$ we choose in \eqref{ZL}. Also, it is important to remark that the appearance of such overall multiplicative parameter is due to the fact that our system of equations, namely \eqref{FN}, is linear in $\mathcal{F}_n$.
 
The coefficient $\kappa_0$ does not depend on the spectral parameters $\lambda_j$ and, therefore, it can be fixed by evaluating the partition function $Z$ at a particular value of its spectral parameters. At first this seems to be a daunting task but fortunately there exist special points where such computation is doable. For instance, from 
\eqref{ZL} one finds
\[
Z(\mu_1, \mu_2, \dots, \mu_L) = \kappa_0 \prod_{j=1}^L \Lambda(\mu_j) \; ,
\]
which leads us to the problem of computing the partition function $Z(\lambda_1, \lambda_2, \dots , \lambda_L)$ at the particular specialization $\lambda_j = \mu_j$.
Now, recalling $Z$ is a sum of products of statistical weights over six-vertex model's configurations (with domain-wall boundaries), one can readily see that there is only one possible (non-vanishing) configuration for this particular specialization of spectral parameters. More precisely, we have
\[
Z(\mu_1, \mu_2, \dots, \mu_L) = c^L \prod_{1 \leq i < j \leq L} a(\mu_i - \mu_j) a(\mu_j - \mu_i)
\]
which gives us
\[
\kappa_0 =  c^L \frac{\displaystyle  \prod_{1 \leq i < j \leq L} a(\mu_i - \mu_j) a(\mu_j - \mu_i)}{\displaystyle  \prod_{1 \leq i \leq L} \Lambda(\mu_i) } \; .
\]

\section{Consequences for $\Lambda$} \label{sec:LAM}

In this section we intend to comment on the consequences of formula \eqref{ZL} for the transfer matrix's eigenvalues $\Lambda$. Here, we do not intend to exploit such consequences thoroughly but rather point out a few ramifications of formula \eqref{ZL} when combined with well known results for the partition function $Z$.

\subsection{Functional equations for $\Lambda$} \label{sec:EQ1}

As previously mentioned, there exist several representations for the partition function $Z$. For instance, it can be written as the well known Izergin-Korepin determinant or the continuous determinantal representations of \cite{Galleas_2016b}. In this way, the combination of the aforementioned representations with formula \eqref{ZL} leaves us with a functional equation for the eigenvalue $\Lambda$. Such equation is similar to the one obtained in \cite{Galleas_2017} for the periodic 
six-vertex model; however, it is not written as the vanishing of a determinant. Nevertheless, it can still be regarded as an inhomogeneous version of the equation derived in \cite{Galleas_2017}. The inhomogeneity term is then the partition function $Z$. 

\subsection{Relation between two eigenvalues} \label{sec:LamLam}

Let $\Lambda$ and $\bar{\Lambda}$ be two distinct eigenvalues of the anti-periodic six-vertex model's transfer matrix $T$. We then attach $\kappa_0$ and $\mathcal{H}_L$ to
the eigenvalue $\Lambda$; whilst $\bar{\kappa}_0$ and $\bar{\mathcal{H}}_L$ is associated to $\bar{\Lambda}$. Since $Z$, given by formula \eqref{ZL}, is an invariant
in the space of the transfer matrix's eigenvalues, we are consequently left with the relation
\[ \label{bbar}
\kappa_0 \; \mathrm{det} \left( \mathcal{H}_L \right) = \bar{\kappa}_0 \; \mathrm{det} \left( \bar{\mathcal{H}}_L \right) \; .
\]
Eq. \eqref{bbar} then establishes a functional relation between any two eigenvalues of the transfer matrix $T$.

\subsection{Recurrence relation for $\Lambda$} \label{sec:REC}

One important result obtained by Korepin in \cite{Korepin_1982} is a recurrence relation characterizing the partition function $Z$. In order to precise Korepin's relation we then write $Z_L (X^{1,L})$ with subscript $L$ for the partition function $Z (X^{1,L})$ on a $L \times L$ square lattice. Moreover, we also write $\Lambda_L$ for the eigenvalue $\Lambda$ in order to emphasize such eigenvalues are also defined for a fixed lattice length $L$. In this way, Korepin's recurrence relation reads
\< \label{RR}
\left. Z_L (X^{1,L}) \right|_{\lambda_i = \mu_j - \gamma} = - c \prod_{l,m = 1}^L b(\lambda_l - \mu_i) b(\lambda_m - \mu_j) Z_{L-1} (X_i^{1,L}) \; .
\>
Now, by substituting formula \eqref{ZL} in \eqref{RR}, we immediately obtain a recurrence relation involving the eigenvalues $\Lambda_L$ and $\Lambda_{L-1}$.

\section{Concluding remarks} \label{sec:CONCL}

Formula \eqref{ZL} is the main result of this work and it is based on a refinement of the analysis previously presented in \cite{Galleas_Twists}. In particular, the relation between six-vertex models with domain-wall boundaries and non-diagonal twisted boundaries had been already put forward in \cite{Galleas_Twists}. 
However, the determinantal formula obtained here makes this relation more concise, offering new perspectives for the study of both systems.

Our present analysis is based on the role played by the \emph{symmetric group} in the study of functional equations originated from the 
Algebraic-Functional method \cite{Galleas_2008, Galleas_2010}. The prominent role of the symmetric group within that method was previously unveiled in \cite{Galleas_2016a, Galleas_2016b, Galleas_2016c},
and here we have extended the symmetric group analysis for the equations derived in \cite{Galleas_Twists}. 
The equations \eqref{FN} considered here are clearly more involving than the ones studied in \cite{Galleas_2016a, Galleas_2016b, Galleas_2016c}, but the main ideas used in the aforementioned works can be straightforwardly adapted to the present case.

Both six-vertex models with domain-wall boundaries and non-diagonal twisted boundaries had been previously studied in the literature and a considerable amount of information is available  for both systems. In this way, formula \eqref{ZL} offers the possibility of using results of the six-vertex model with domain-wall boundaries to the case with non-diagonal twists and vice-versa. Some of these possibilities are discussed in \Secref{sec:LAM} but they probably do not exhaust the applications of
formula \eqref{ZL} along these lines.

As for the determinantal representation \eqref{ZL}, we can summarize some of its main features:
\begin{enumerate}[label=(\roman*)]
\item it totals $2^L$ representations depending on the transfer matrix eigenvalue $\Lambda$ we use in the RHS; \\
\item it is an invariant on the spectrum of the anti-periodic transfer matrix $T$. That is, it does not depend on the particular eigenvalue $\Lambda$ we use in the RHS; \\
\item it provides a relation between two distinct eigenvalues, i.e. $\Lambda$ and $\bar{\Lambda}$; \\
\item previous results for $Z$ implies  a functional equation for the eigenvalues $\Lambda$; \\
\item the known recurrence relations for $Z$ are immediately translated into recurrence relations for the eigenvalues $\Lambda$.
\end{enumerate}
In the present work we have not explored any of the above mentioned directions of research but we hope to report on such possibilities in future publications.

\bibliographystyle{alpha}
\bibliography{references1}

\begin{thebibliography}{CYSW13}

\bibitem[Bax72]{Baxter_1972}
R.~J. Baxter.
\newblock {Partition-function of 8-vertex lattice model}.
\newblock {\em {Ann. Phys.}}, {70}({1}):{193}, {1972}.

\bibitem[BBOY95]{Baxter_1995}
M.~T. Batchelor, R.~J. Baxter, M.~J. Orourke, and C.~M. Yung.
\newblock {Exact solution and interfacial tension of the $6$-vertex model with
  antiperiodic boundary conditions}.
\newblock {\em {J. Phys. A: Math. and Gen.}}, {28}({10}):{2759--2770}, {1995}.

\bibitem[CYSW13]{WengLi_2013}
J.~Cao, W.~L. Yang, K.~Shi, and Y.~Wang.
\newblock {Off-diagonal Bethe ansatz and exact solution of a topological spin
  ring}.
\newblock {\em {Phys. Rev. Lett.}}, {111}:{137201}, {2013}.

\bibitem[dGGS11]{deGier_Galleas_2011}
J.~de~Gier, W.~Galleas, and M.~Sorrell.
\newblock Multiple integral formula for the off-shell six vertex scalar
  product.
\newblock arXiv:1111.3712 [hep-th], 2011.

\bibitem[dV84]{deVega_1984}
H.~J. de~Vega.
\newblock {Families of commuting transfer matrices and integrable models with
  disorder}.
\newblock {\em {Nucl. Phys. B}}, {240}({4}):{495--513}, {1984}.

\bibitem[Gal08]{Galleas_2008}
W.~Galleas.
\newblock {Functional relations from the Yang-Baxter algebra: Eigenvalues of
  the $XXZ$ model with non-diagonal twisted and open boundary conditions}.
\newblock {\em {Nucl. Phys. B}}, {790}({3}):{524--542}, {2008}.

\bibitem[Gal10]{Galleas_2010}
W.~Galleas.
\newblock Functional relations for the six-vertex model with domain wall
  boundary conditions.
\newblock {\em J. Stat. Mech.}, 06:P06008, 2010.

\bibitem[Gal12]{Galleas_2012}
W.~Galleas.
\newblock {Multiple integral representation for the trigonometric SOS model
  with domain wall boundaries}.
\newblock {\em {Nucl. Phys. B}}, {858}({1}):{117--141}, {2012}.

\bibitem[Gal13]{Galleas_2013}
W.~Galleas.
\newblock {Refined functional relations for the elliptic SOS model}.
\newblock {\em {Nucl. Phys. B}}, {867}:{855--871}, {2013}.

\bibitem[Gal14]{Galleas_Twists}
W.~Galleas.
\newblock {Twisted Heisenberg chain and the six-vertex model with DWBC}.
\newblock {\em {J. Stat. Mech.}}, {11}:P11028, {2014}.

\bibitem[Gal16a]{Galleas_2016}
W.~Galleas.
\newblock {New differential equations in the six-vertex model}.
\newblock {\em {J. Stat. Mech.}}, ({3}):33106--33118, {2016}.

\bibitem[Gal16b]{Galleas_2016b}
W.~Galleas.
\newblock {On the elliptic $\mathfrak{gl}_2 $ solid-on-solid model: functional
  relations and determinants}.
\newblock arXiv: 1606.06144 [math-ph], 2016.

\bibitem[Gal16c]{Galleas_2016a}
W.~Galleas.
\newblock Partition function of the elliptic solid-on-solid model as a single
  determinant.
\newblock {\em Phys. Rev. E}, 94(1):010102, 2016.

\bibitem[Gal17a]{Galleas_2016c}
W.~Galleas.
\newblock {Continuous representations of scalar products of Bethe vectors}.
\newblock {\em J. Math. Phys.}, 58(8):083504, 2017.

\bibitem[Gal17b]{Galleas_2017}
W.~Galleas.
\newblock {Six-vertex model and non-linear differential equations I. Spectral
  problem}.
\newblock {\em {to appear in Comm. Math. Phys.}}, {arXiv: 1705.03408
  [math-ph]}, {2017}.

\bibitem[Ize87]{Izergin_1987}
A.~G. Izergin.
\newblock Partition function of the six-vertex model in a finite lattice.
\newblock {\em Sov. Phys. Dokl.}, 32:878, 1987.

\bibitem[Kor82]{Korepin_1982}
V.~E. Korepin.
\newblock {Calculation of norms of Bethe wave functions}.
\newblock {\em Commun. Math. Phys.}, 86:391--418, 1982.

\bibitem[Kup96]{Kuperberg_1996}
G.~Kuperberg.
\newblock {Another proof of the alternating sign matrix conjecture}.
\newblock {\em Inter. Math. Res. Notes}, 1996(3):139--150, 1996.

\bibitem[KW41a]{Kramers_1941a}
H.~A. Kramers and G.~H. Wannier.
\newblock {Statistics of the two-dimensional ferromagnet Part I}.
\newblock {\em Phys. Rev.}, 60(3):252, 1941.

\bibitem[KW41b]{Kramers_1941b}
H.~A. Kramers and G.~H. Wannier.
\newblock {Statistics of the two-dimensional ferromagnet Part II}.
\newblock {\em Phys. Rev.}, 60(3):263, 1941.

\bibitem[KZJ00]{Justin_2000}
V.~E. Korepin and P.~Zinn-Justin.
\newblock Thermodynamic limit of the six-vertex model with domain wall boundary
  conditions.
\newblock {\em J. Phys. A: Math. Gen.}, 33:7053, 2000.

\bibitem[Ons44]{Onsager_1944}
L.~Onsager.
\newblock {Crystal Statistics. I. A Two-Dimensional Model with an
  Order-Disorder Transition}.
\newblock {\em Phys. Rev.}, 65:117--149, 1944.

\bibitem[RMG03]{Ribeiro_Galleas_2003}
G.~A.~P. Ribeiro, M.~J. Martins, and W.~Galleas.
\newblock {Integrable $SU(N)$ vertex models with general toroidal boundary
  conditions}.
\newblock {\em Nucl. Phys. B}, 675(3):567--583, 2003.

\bibitem[War08]{Warnaar_2008}
S.O. Warnaar.
\newblock {Bisymmetric functions, Macdonald polynomials and basic
  hypergeometric series}.
\newblock {\em Compositio Mathematica}, 144(2):271–303, 2008.

\bibitem[ZJ00]{Justin_2000a}
P.~Zinn-Justin.
\newblock {Six-vertex model with domain wall boundary conditions and one-matrix
  model}.
\newblock {\em Phys. Rev. E}, 62:3411--3418, 2000.

\end{thebibliography}

%
%
%
%
%
%

\end{document}